\documentclass[11pt,reqno]{article}

\usepackage{amssymb,amscd,amsmath,amsthm}
\numberwithin{equation}{section}
\usepackage{graphicx}
\textheight=22.5truecm \textwidth=15.5truecm \hoffset=-17mm
\voffset=-6mm

\newtheorem{theorem}{Theorem}[section]

\newtheorem{lemma}[theorem]{Lemma}
\newtheorem{corollary}[theorem]{Corollary}
\newtheorem{definition}[theorem]{Definition}
\newtheorem{remark}[theorem]{Remark}

\def\cb{{\mathcal B}}

\def\ce{{\mathcal E}}

\def\cw{{\mathcal W}}

\def\bc{{\mathbb C}}
\def\bh{{\mathbb H}}

\def\bn{{\mathbb N}}

\def\br{{\mathbb R}}

\def\bz{{\mathbb Z}}

\def\a{\alpha}
\def\b{\beta}

\def\tr{{\rm Tr}}
\def\L{\Lambda}
\def\G{\Gamma}
\def\ce{\mathcal E}

\def\ffi{\varphi}

\def\<{\langle}
\def\>{\rangle}

\def\1{\mathbf{1}}

\def\cw{\cal W}

\def\cal{\mathcal}

\def\ssub{\subset \subset}

\def\s{\sigma}
\def\bh{\mathbf{h}}

\def\id{{\bf 1}\!\!{\rm I}}

\begin{document}

\begin{center}
{\Large {\bf On Quantum Markov Chains on Cayley tree I:\\[2mm]
 uniqueness of the associated chain with $XY$-model on the Cayley tree of order two}}\\[1cm]
\end{center}

\begin{center}
{\large {\sc Luigi Accardi}}\\[2mm]
\textit{Centro Interdisciplinare Vito Volterra\\
II Universit\`{a} di Roma ``Tor Vergata''\\
Via Columbia 2, 00133 Roma, Italy} \\
E-email: {\tt accardi@volterra.uniroma2.it}
\end{center}

\begin{center}
{\large {\sc Farrukh Mukhamedov}}\\[2mm]
\textit{ Department of Computational \& Theoretical Sciences,\\
Faculty of Science, International Islamic University Malaysia,\\
P.O. Box, 141, 25710, Kuantan, Pahang, Malaysia}\\
E-mail: {\tt far75m@yandex.ru, \ farrukh\_m@iiu.edu.my}
\end{center}

\begin{center}
{\large{\sc Mansoor Saburov}}\\[2mm]
\textit{Department of Computational \& Theoretical Sciences,\\
Faculty of Science, International Islamic University Malaysia,\\
P.O. Box, 141, 25710, Kuantan, Pahang, Malaysia}\\
E-mail: {\tt msaburov@gmail.com}
\end{center}

\begin{abstract}
In the present paper we study forward Quantum Markov Chains (QMC)
defined on Cayley tree. A construction of such QMC is provided,
namely we construct states on finite volumes with boundary
conditions, and define QMC as a weak limit of those states which
depends on the boundary conditions. Using the provided construction
we investigate QMC associated with $XY$-model on a Caylay tree of
order two. We prove uniqueness of QMC associated with such a model,
this means the QMC does not depend on the boundary conditions.

\vskip 0.3cm \noindent {\it Mathematics Subject Classification}:
46L53, 60J99, 46L60, 60G50, 82B10, 81Q10, 94A17.\\
{\it Key words}: Quantum Markov chain; Cayley tree; $XY$-model;
uniqueness.
\end{abstract}

\section{Introduction }\label{intr}

Nowadays, it is know that Markov fields play an important role in
classical probability, in physics, in biological and neurological
models and in an increasing number of technological problems such as
image recognition. Therefore, it is quite natural to forecast that
the quantum analogue of these models will also play a relevant role.
The quantum analogues of Markov processes were first constructed in
\cite{[Ac74f]}, where the notion of quantum Markov chain (QMC) on
infinite tensor product algebras was introduced. Nowadays, QMC have
become a standard computational tool in solid state physics, and
several natural applications have emerged in quantum statistical
mechanics and quantum information theory. The reader is referred to
\cite{fannes2,G,ILW,Kum,OP} and the references cited therein, for
recent developments of the theory and the applications.

A first attempts to construct a quantum analogue of classical Markov
fields has been done in \cite{[Liebs99]}, \cite{[AcFi01a]},
\cite{[AcFi01b]},\cite{AcLi}. These papers extend to fields the
notion of {\it quantum Markov state} introduced in \cite{[AcFr80]}
as a sub--class of the QMC introduced in \cite{[Ac74f]}. In
\cite{[AcFiMu07]} it has been proposed a definition of quantum
Markov states and chains, which extend a proposed one in
\cite{Oh05}, and includes all the presently known examples.   Note
that in the mentioned papers quantum Markov fields were considered
over multidimensional integer lattice $\bz^d$. This lattice has so
called amenability condition. Therefore, it is natural to
investigate quantum Markov fields over non-amenable lattices. One of
the simplest non-amenable lattice is a Cayley tree. First attempts
to investigate QMC over such trees was done in \cite{aklt}, such
studies were related to the investigation of thermodynamic limit of
valence-bond-solid models on a Cayley tree \cite{fannes}.  There, it
was constructed finitely correlated states as ground states of
VBS-model on Cayley tree. The mentioned considerations naturally
suggest the study of the following problem: the extension to fields
the notion of QMC. In \cite{AOM} we have introduced a hierarchy of
notions of Markovianity for states on discrete infinite tensor
products of $C^*$--algebras and for each of these notions we
constructed some explicit examples. We showed that the construction
of \cite{[AcFr80]} can be generalized to trees. It is worth to note
that, in a different context and for quite different purposes, the
special role of trees was already emphasized in \cite{[Liebs99]}.
Noncommutative extensions of classical Markov fields, associated
with Ising and Potts models on Cayley tree, were investigated in
\cite{Mukh04,MR04}. In the classical case, Markov fields on trees
were also considered in \cite{[Pr]}-\cite{[Za85]}.

In the present paper we continue our investigations started in
\cite{AOM}. In \cite{AOM} we have studied backward QMC defined on
the Cayley tree.
 Note that shift
invariant backward QMC chains can be also considered as an extension
of $C^*$-finitely correlated states defined in \cite{fannes2} to the
Cayley trees. But the forward QMC cannot be described by the
finitely correlated ones (see Remark 3.4 below). Therefore, in
section 3 we provide a construction of forward QMC. Namely we
construct states on finite volumes with boundary conditions, and
define QMC as a weak limit of those states which depends on the
boundary conditions. There, we involve some methods used in the
theory of Gibbs measures on trees (see \cite{Geor}). Such
constructions extend ones provided in \cite{Ac87,[AcFr80]}. In
section 4, by means of the provided construction we investigate QMC
associated with $XY$-model on a Cayley tree of order two. For that
model, in a QMC scheme, we prove uniqueness of the limiting state,
i.e. which does not depend on the boundary conditions. Note that
whether or not the resulting states have a physical interest is a
question that cannot be solved on a purely mathematical ground. We
have to stress that  classical $XY$-model have been investigated by
many authors on a 1D-lattice \cite{MH,YTC}, and also on a Cayley
tree \cite{BO}. In a quantum setting such a model were studied in
\cite{arm,LSM,K,FH}.

\section{Preliminaries}\label{dfqmf}

Recall that a Cayley tree $\Gamma^k$ of order $k \ge 1$ is an
infinite tree whose each vertices have exactly $k+1$ edges. The
vertices $x$ and $y$ are called {\it nearest neighbors} and they are
denoted by $l=<x,y>$ if there exists an edge connecting them. A
collection of the pairs $<x,x_1>,\dots,<x_{d-1},y>$ is called a {\it
path} from the point $x$ to the point $y$. The distance $d(x,y),
x,y\in V$, on the Cayley tree, is the length of the shortest path
from $x$ to $y$. If we cut away  an edge $\{x,y\}$ of the tree
$\Gamma^k$, then $\Gamma^k$ splits into connected components, called
semi-infinite trees with roots $x$ and $y$, which will be denoted
respectively by $\Gamma^k(x)$ and $\Gamma^k(y)$. If we cut away from
$\Gamma^k$ the origin $O$ together with all $k+1$ nearest neighbor
vertices, in the result we obtain $k$ semi-infinite $\Gamma^k(x)$
trees with $x \in S_0 = \{ y \in \Gamma^k \, : \, d(O ,y) =1\}$.
Hence we have
\[
\Gamma^k = \bigcup_{x\in S_0} \Gamma^k(x) \cup \{ O\}.
\]

Therefore, in the sequel we will consider semi-infinite Cayley tree
$\Gamma^k = (L,E)$ with the root $x^0$, $L$ is the set of vertices
and $E$ is the set of edges.

Now we are going to introduce a coordinate structure in $\G^k$ as
follows: every vertex $x$ (except for $x^0$) of $\G^k$ has
coordinates $(i_1,\dots,i_n)$, here $i_m\in\{1,\dots,k\}$, $1\leq
m\leq n$ and for the vertex $x^0$ we put $(0)$.  Namely, the symbol
$(0)$ constitutes level 0, and the sites $(i_1,\dots,i_n)$ form
level $n$ ( i.e. $d(x^0,x)=n$) of the lattice (see Fig. \ref{fig1}).

\begin{figure}
\begin{center}
\includegraphics[width=10.07cm]{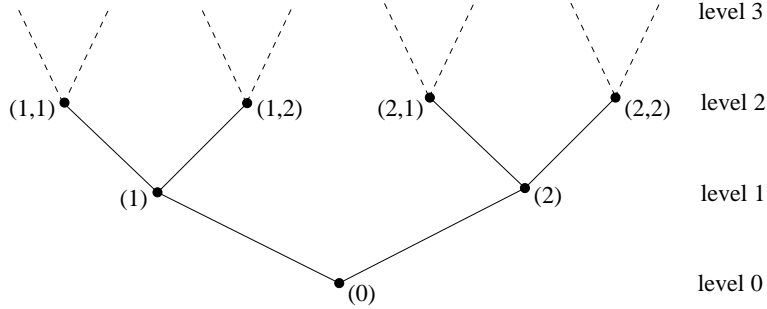}
\end{center}
\caption{The first levels of $\G^2$} \label{fig1}
\end{figure}

Let us set
\[
W_n = \{ x\in L \, : \, d(x,x_0) = n\} , \qquad \Lambda_n =
\bigcup_{k=0}^n W_k, \qquad  \L_{[n,m]}=\bigcup_{k=n}^mW_k, \ (n<m)
\]
\[
E_n = \big\{ <x,y> \in E \, : \, x,y \in \Lambda_n\big\}, \qquad
\Lambda_n^c = \bigcup_{k=n}^\infty W_k
\]
For $x\in \G^k$, $x=(i_1,\dots,i_n)$ denote $ S(x)=\{(x,i):\ 1\leq
i\leq k\}$, here $(x,i)$ means that $(i_1,\dots,i_n,i)$. This set is
called a set of {\it direct successors} of $x$.

 %(see Fig. \ref{fig1}).
%\begin{figure}
%\begin{center}
%\includegraphics[width=10.07cm]{tree.ps}
%\end{center}
%\caption{The first levels of $\G_+^2$} \label{fig1}
%\end{figure}

The algebra of observables $\cb_x$ for any single site $x\in L$ will
be taken as the algebra $M_d$ of the complex $d\times d$ matrices.
The algebra of observables localized in the finite volume $\L\subset
L$ is then given by $\cb_\L=\bigotimes\limits_{x\in\L}\cb_x$. As
usual if $\L^1\subset\L^2\subset L$, then $\cb_{\L^1}$ is identified
as a subalgebra of $\cb_{\L^2}$ by tensoring with units matrices on
the sites $x\in\L^2\setminus\L^1$. Note that, in the sequel, by
$\cb_{\L,+}$ we denote positive part of $\cb_\L$. The full algebra
$\cb_L$ of the tree is obtained in the usual manner by an inductive
limit
$$
\cb_L=\overline{\bigcup\limits_{\L_n}\cb_{\L_n}}.
$$

In what follows, by ${\cal S}({\cal B}_\L)$ we will denote the set
of all states defined on the algebra ${\cal B}_\L$.

Consider a triplet ${\cal C} \subset {\cal B} \subset {\cal A}$ of
unital $C^*$-algebras. Recall that a {\it quasi-conditional
expectation} with respect to the given triplet is a completely
positive (CP) identity preserving linear map $\ce \,:\, {\cal A} \to
{\cal B}$ such that $\ce(ca) = c \ce(a)$, $a\in {\cal A}$, $c \in
{\cal C}$.

\begin{definition}[\cite{AOM}]\label{QMCdef}
Let $\varphi$ be a state on ${\cal B}_L$. Then $\ffi$ is called
\begin{enumerate}
\item[(i)]a {\it forward quantum $d$-Markov chain (QMC)}, associated to
$\{\L_n\}$, if for each $\Lambda_n$, there exist a quasi-conditional
expectation $\ce_{\Lambda_n^c}$ with respect to the triplet
\begin{equation}\label{trplt1}
{\cal B}_{{\Lambda}_{n+1}^c}\subseteq {\cal
B}_{\Lambda_n^c}\subseteq{\cal B}_{\Lambda_{n-1}^c}
\end{equation}
and a state $ \hat\varphi_{\Lambda_n^c}\in{\cal S}({\cal
B}_{\Lambda_n^c}) $ such that for any $n\in {\mathbb N}$ one has
\begin{equation}\label{eq4.1re}
\hat\varphi_{\Lambda_n^c}| {\cal
B}_{\Lambda_{n+1}\backslash\Lambda_n} =
\hat\varphi_{\Lambda_{n+1}^c}\circ \ce_{\Lambda_{n+1}^c}| {\cal
B}_{\Lambda_{n+1}\backslash\Lambda_n}
\end{equation}
and
\begin{equation}\label{dfgqmf}
\varphi = \lim_{n\to\infty} \hat\varphi_{\Lambda_n^c}\circ
\ce_{\Lambda_n^c}\circ \ce_{\Lambda_{n-1}^c} \circ \cdots \circ
\ce_{\Lambda_1^c}
\end{equation}
in the weak-* topology.

\item[(ii)] a {\it backward quantum d-Markov chain}, associated to $\{\L_n\}$, if there
exist a quasi-conditional expectation $\ce_{\Lambda_n}$ with respect
to the triple ${\cal B}_{{\Lambda}_{n-1}}\subseteq {\cal
B}_{\Lambda_n}\subseteq{\cal B}_{\Lambda_{n+1}}$ for each
$n\in\bn$and an initial state $\rho_0\in S(B_{\L_0})$ such that
\begin{equation*}
\varphi = \lim_{n\to\infty} \rho_0\circ \ce_{\Lambda_0}\circ
\ce_{\Lambda_{1}} \circ \cdots \circ \ce_{\Lambda_n}
\end{equation*}
in the weak-* topology.

\end{enumerate}
\end{definition}

In this definition, a forward QMC $\varphi$ generated by ${\cal
E}_{\Lambda_n^c}$ and $\varphi_{\Lambda_n^c}$, is well-defined.
Indeed, we have
\[
\hat\varphi_{\Lambda_n^c} \circ \ce_{\Lambda_{n}^c}
|\cb_{\Lambda_n}= \hat\varphi_{\Lambda_{n+1}^c}\circ
\ce_{\Lambda_{n+1}^c} \circ \ce_{\Lambda_n^c}| {\cal B}_{\Lambda_n}
\]
by (\ref{eq4.1re}) and a following remark so that, for $\Lambda
\ssub \Lambda_k$ and $a\in {\cal B}_{\Lambda}$,
\[
\lim_{n\to\infty}  \hat\varphi_{\Lambda_n^c}\circ
\ce_{\Lambda_n^c}\circ \ce_{\Lambda_{n-1}^c} \circ \cdots \circ
\ce_{\Lambda_1^c}(a) = \hat\varphi_{\Lambda_k^c}\circ
\ce_{\Lambda_k^c}\circ \ce_{\Lambda_{k-1}^c} \circ \cdots \circ
\ce_{\Lambda_1^c}(a).
\]

Similarly, one can also demonstrate that backward QMC is
well-defined.

\begin{remark} Note that in \cite{AOM} a forward QMC was called a generalized quantum Markov state.
\end{remark}

\begin{remark} We have to stress that in most well known papers (see for example
\cite{[AcFi01a], AF03,fannes,fannes2,Lu})
 related to QMC, all such
states were investigated as a backward QMC. Therefore, in the sequel
we will be interested in forward QMC, which is less studied.
\end{remark}

%%%%%%%%%%%%%%%%%%%%%%%%%%%%%%%%%%%%%%%%%%%%%%%%%%%%%%%%%%%% section 5

\section{A constructions of the forward QMC on the Cayley tree}\label{dfcayley}

In this section, we are going to provide a construction of forward
quantum $d$-Markov chain. Note that a construction of backward QMC
has been studied in \cite{AOM}.

Let us rewrite the elements of $W_n$ in the following order, i.e.
\begin{eqnarray*}
\overrightarrow{W_n}:=\left(x^{(1)}_{W_n},x^{(2)}_{W_n},\cdots,x^{(|W_n|)}_{W_n}\right),\quad \overleftarrow{W_n}:=\left(x^{(|W_n|)}_{W_n},x^{(|W_n|-1)}_{W_n},\cdots, x^{(1)}_{W_n}\right).
\end{eqnarray*}
Note that $|W_n|=k^n$. Vertices $x^{(1)}_{W_n},x^{(2)}_{W_n},\cdots,x^{(|W_n|)}_{W_n}$ of $W_n$ can be
represented in terms of the coordinate system as follows
\begin{eqnarray*}
&&x^{(1)}_{W_n}=(1,1,\cdots,1,1), \quad x^{(2)}_{W_n}=(1,1,\cdots,1,2), \ \ \cdots \quad x^{(k)}_{W_n}=(1,1,\cdots,1,k,),\\
&&x^{(k+1)}_{W_n}=(1,1,\cdots,2,1), \quad x^{(2)}_{W_n}=(1,1,\cdots,2,2), \ \ \cdots \quad x^{(2k)}_{W_n}=(1,1,\cdots,2,k),
\end{eqnarray*}
\[\vdots\]
\begin{eqnarray*}
&&x^{(|W_n|-k+1)}_{W_n}=(k,k,,\cdots,k,1), \ x^{(|W_n|-k+2)}_{W_n}=(k,k,\cdots,k,2),\ \ \cdots  x^{|W_n|}_{W_n}=(k,k,\cdots,k,k).
\end{eqnarray*}

Analogously, for a given vertex $x,$ we shall use the following notation for
the set of direct successors of $x$:
\begin{eqnarray*}
\overrightarrow{S(x)}:=\left((x,1),(x,2),\cdots (x,k)\right),\quad
\overleftarrow{S(x)}:=\left((x,k),(x,k-1),\cdots (x,1)\right).
\end{eqnarray*}
In what follows, for the sake of simplicity, we will use notation
$i\in \overrightarrow{S(x)}$ (resp. $i\in \overleftarrow{S(x)}$
instead of $(x,i)\in \overrightarrow{S(x)}$ (resp. $(x,i)\in
\overleftarrow{S(x)}$).

Assume that for each edge $<x,y>\in E$ of the tree an operator
$K_{<x,y>}\in {\cal B}_{\{x,y\}}$ is assigned. We would like to
define a state on $\cb_{\L_n}$ with boundary conditions $w_{0}\in
{\cal B}_{(0),+}$ and $\bh=\{h_x\in {\cal B}_{x,+}\}_{x\in L}$.
To do this, we denote
\begin{eqnarray}\label{Kn1}
K_{[m-1,m]}&:=&\prod_{x\in
\overrightarrow{W}_{m-1}}\prod_{y\in \overrightarrow{S(x)}}K_{<x,y>},\\
\label{Kn2}
\bh^{1/2}_n&:=&\prod_{x\in \overrightarrow{W}_n}h_x^{1/2}, \quad \quad \bh_n:=\bh^{1/2}_n(\bh^{1/2}_n)^{*},\\
\label{Kn3}
K_n&:=&w_0^{1/2}K_{[0,1]}K_{[1,2]}\cdots K_{[n-1,n]}\bh^{1/2}_n,\\
\label{Kn4}
{\cw}_{n]}&:=&K_nK_n^{*},
\end{eqnarray}
It is clear that ${\cw}_{n]}$ is positive.

In what follows, by $\tr_{\L}:\cb_L\to\cb_{\L}$ we mean normalized
partial trace, for any $\Lambda\subseteq_{\text{fin}}L$. For the
sake of shortness we put $\tr_{n]} := \tr_{\Lambda_n}$.

Let us define a positive functional $\ffi^{(n,f)}_{w_0,\bh}$ on
$\cb_{\Lambda_n}$ by
\begin{eqnarray}\label{ffi-ff}
\ffi^{(n,f)}_{w_0,\bh}(a)=\tr(\cw_{n+1]}(a\otimes\id_{W_{n+1}})),
\end{eqnarray}
for every $a\in \cb_{\Lambda_n}$, where
$\id_{W_n+1}=\bigotimes\limits_{y\in W_{n+1}}\id$. Note that here,
$\tr$ is a normalized trace on ${\cal B}_L$.

To get an infinite-volume state $\ffi^{(f)}$ on $\cb_L$  such
that $\ffi^{(f)}\lceil_{\cb_{\L_n}}=\ffi^{(n,f)}_{w_0,\bh}$, we
need to impose some constrains to the boundary conditions
$\big\{w_0,\bh\big\}$ so that the functionals
$\{\ffi^{(n,f)}_{w_0,\bh}\}$ satisfy the compatibility condition,
i.e.
\begin{eqnarray}\label{compatibility}
\ffi^{(n+1,f)}_{w_0,\bh}\lceil_{\cb_{\L_n}}=\ffi^{(n,f)}_{w_0,\bh}.
\end{eqnarray}

\begin{theorem}\label{compa} Let the boundary conditions $w_{0}\in {\cal
B}_{(0),+}$ and ${\bh}=\{h_x\in {\cal B}_{x,+}\}_{x\in L}$ satisfy
the following conditions:
\begin{eqnarray}\label{eq1}
&& \tr ( w_0 h_0 ) =1 \\
\label{eq2} &&\tr_{x]}\left[\prod_{y\in
\overrightarrow{S(x)}}K_{<x,y>} \prod_{y\in
\overrightarrow{S(x)}}h_{y}\prod_{y\in
\overleftarrow{S(x)}}K_{<x,y>}^*\right]=h_x \ \ \textrm{for every} \
\  x\in L.
\end{eqnarray}
Then the functionals $\{\ffi^{(n,f)}_{w_0,\bh}\}$ satisfy the
compatibility condition \eqref{compatibility}. Moreover, there is a
unique forward quantum $d$-Markov chain  $\ffi^{(b)}_{w_0,{\bh}}$ on $\cb_L$ such that
$\ffi^{(f)}_{w_0,{\bh}}=w-\lim_{n\to\infty}\ffi^{(n,f)}_{w_0,\bh}$.
\end{theorem}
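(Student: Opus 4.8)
The plan is to verify the two assertions in turn: first the compatibility condition \eqref{compatibility}, and then the existence and uniqueness of the forward QMC built from the compatible family. For the compatibility part, I would start from the definition \eqref{ffi-ff} with $n$ replaced by $n+1$, so that for $a\in\cb_{\L_n}$ one has $\ffi^{(n+1,f)}_{w_0,\bh}(a)=\tr(\cw_{n+2]}(a\otimes\id_{W_{n+1}}\otimes\id_{W_{n+2}}))$. The key is to carry out the partial trace over the top shell $W_{n+2}$ inside $\cw_{n+2]}=K_{n+2}K_{n+2}^*$. Writing $K_{n+2}=K_{n+1}\,K_{[n+1,n+2]}\,\bh_{n+2}^{1/2}$ (after absorbing the $\bh_{n+1}^{1/2}$ factor appropriately), and using that $K_{[n+1,n+2]}$ and $\bh_{n+2}^{1/2}$ are built as ordered products over $x\in W_{n+1}$ of the mutually commuting blocks $\prod_{y\in\overrightarrow{S(x)}}K_{<x,y>}$ and $\prod_{y\in\overrightarrow{S(x)}}h_y^{1/2}$ sitting on disjoint sets of sites, the trace over $W_{n+2}$ factorizes into a product over $x\in W_{n+1}$ of exactly the expressions appearing on the left-hand side of \eqref{eq2}. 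Hypothesis \eqref{eq2} then collapses each such block to $h_x$, which reassembles precisely into $\bh_{n+1}$, and one is left with $\tr(\cw_{n+1]}(a\otimes\id_{W_{n+1}}))=\ffi^{(n,f)}_{w_0,\bh}(a)$. Condition \eqref{eq1} is what makes $\ffi^{(0,f)}_{w_0,\bh}$, hence every $\ffi^{(n,f)}_{w_0,\bh}$, a state rather than merely a positive functional (normalization).

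For the second assertion, once \eqref{compatibility} holds, the family $\{\ffi^{(n,f)}_{w_0,\bh}\}$ is a consistent family of states on the increasing chain of algebras $\cb_{\L_n}$ with $\bigcup_n\cb_{\L_n}$ dense in $\cb_L$; by the standard inductive-limit argument this family extends uniquely to a state $\ffi^{(f)}_{w_0,\bh}$ on $\cb_L$, and on the dense subalgebra it is literally the weak-* limit $w\text{-}\lim_n\ffi^{(n,f)}_{w_0,\bh}$ (the sequence is eventually constant on each $\cb_{\L_k}$). It then remains to exhibit this $\ffi^{(f)}_{w_0,\bh}$ as a forward quantum $d$-Markov chain in the sense of Definition \ref{QMCdef}(i): I would define the candidate quasi-conditional expectations $\ce_{\L_n^c}:\cb_{\L_n^c}\to\cb_{\L_{n-1}^c}$ and boundary states $\hat\ffi_{\L_n^c}$ directly from the operators $K_{[n-1,n]}$ and the $h_x$'s — concretely, $\ce_{\L_n^c}(\,\cdot\,)$ is the map that inserts the block $K_{[n-1,n]}$, applies the normalized partial trace over $W_n$, and uses the $\bh$-weights, designed so that the $\cc\subset\cb\subset\ca$-module property with $\cc=\cb_{\L_{n-1}\setminus\L_{n-1}}$ (the appropriate localization) holds by construction. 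One checks $\ce_{\L_n^c}$ is CP and identity-preserving (it is a composition of a conjugation by $K$'s and a partial trace, suitably normalized), the local compatibility \eqref{eq4.1re} follows from \eqref{eq2} by the same partial-trace collapse as above, and the telescoping limit \eqref{dfgqmf} reproduces $\ffi^{(f)}_{w_0,\bh}$ because on each $\cb_{\L_k}$ the composition $\hat\ffi_{\L_n^c}\circ\ce_{\L_n^c}\circ\cdots\circ\ce_{\L_1^c}$ stabilizes to $\ffi^{(k,f)}_{w_0,\bh}$, exactly as already noted in the paragraph following Definition \ref{QMCdef}. Uniqueness of the QMC structure with the prescribed limit is then immediate since the state on $\cb_L$ is determined by its restrictions to the $\cb_{\L_n}$.

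I expect the main obstacle to be purely bookkeeping: keeping the non-commuting ordered products $\overrightarrow{W_n}$, $\overrightarrow{S(x)}$ versus $\overleftarrow{S(x)}$ straight so that the partial trace over the outermost shell genuinely factorizes site-by-site and lands precisely on the left-hand side of \eqref{eq2} with the $K$'s in the correct (forward) order and the $K^*$'s in the reversed order. The reason this works is that the blocks attached to different vertices $x,x'\in W_{n+1}$ act on disjoint tensor factors and therefore commute, so the global ordered product can be reorganized into independent per-vertex pieces before tracing; making this rearrangement explicit, and verifying that $\bh_n$ as defined in \eqref{Kn2} is exactly $\prod_{x\in W_n}h_x$ so that it reassembles correctly after applying \eqref{eq2}, is the one place where care is needed. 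Everything else — positivity of $\cw_{n]}$, the inductive-limit extension, the CP and module properties of $\ce_{\L_n^c}$ — is routine.
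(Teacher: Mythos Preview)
Your proposal is correct and follows essentially the same route as the paper: the paper phrases the compatibility step as the operator projectivity $\tr_{n-1]}(\cw_{n]})=\cw_{n-1]}$ (citing \cite{[AcFr80]}) and then verifies it by the same block-commutation-and-collapse via \eqref{eq2} that you describe, with \eqref{eq1} giving normalization; for the forward-QMC structure it writes down the quasi-conditional expectations explicitly as $\hat\ce_{\L_1^c}(x_{[0})=\tr_{[1}(K_{[0,1]}w_0^{1/2}x_{[0}w_0^{1/2}K_{[0,1]}^*)$ and $\ce_{\L_k^c}(x_{[k-1})=\tr_{[k}(K_{[k-1,k]}x_{[k-1}K_{[k-1,k]}^*)$, which is exactly the construction you sketch. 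The only cosmetic difference is that the paper works at the level of the density operators $\cw_{n]}$ rather than the functionals, which also yields the side corollary $\ffi^{(n,f)}_{w_0,\bh}(a)=\tr(\cw_{n]}a)$ for $a\in\cb_{\L_n}$.
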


\begin{proof} First note that \cite{[AcFr80]} a family of states
$\{\ffi^{(n,f)}_{w_0,\bh}\}$ satisfy the compatibility condition if
a sequence $\{\cw_{n]}\}$ is {\it projective} with respect to
$\tr_{n]}$, i.e.
\begin{equation}\label{pro}
\tr_{n-1]}(\cw_{n]})=\cw_{n-1]}, \ \ \forall n\in\bn.
\end{equation}

Now let us check the equality \eqref{pro}. From \eqref{Kn1}-\eqref{Kn4} one has
\begin{eqnarray*}
\cw_{n]}&=&
w_0^{1/2}\bigg(\prod_{m=1}^{n-1}K_{[m-1,m]}\bigg)K_{[n-1,n]}\bh_{n}K_{[n-1,n]}^*
\bigg(\prod_{m=1}^{n-1}K_{[m-1,m]}\bigg)^* w_0^{1/2}.
\end{eqnarray*}

We know that for different $x$ and $x'$ taken from $W_{n-1}$ the
algebras $\cb_{x\cup S(x)}$ and  $\cb_{x'\cup S(x')}$ commute,
therefore from \eqref{Kn2} one finds
\begin{eqnarray*}
K_{[n-1,n]}\bh_{n}K_{[n-1,n]}^*= \prod_{x\in
\overrightarrow{W}_{n-1}}\bigg(\prod_{y\in\overrightarrow{S(x)}}
K_{<x,y>}\bigg)\bigg( \prod_{y\in\overrightarrow{S(x)}}
h_{y}\bigg)\bigg( \prod_{y\in\overleftarrow{S(x)}}K^*_{<x,y>}\bigg).
\end{eqnarray*}

Hence, from the last equality with \eqref{eq2} we get
\begin{eqnarray*}
\tr_{n-1]}(\cw_{n]})&=&
w_0^{1/2}\bigg(\prod_{m=1}^{n-1}K_{[m-1,m]}\bigg)\\
&&\times \prod_{x\in
\overrightarrow{W}_{n-1}}\tr_x\bigg(\prod_{y\in\overrightarrow{S(x)}}
K_{<x,y>}\prod_{y\in\overrightarrow{S(x)}} h_{y}
\prod_{y\in\overleftarrow{S(x)}}K^*_{<x,y>}\bigg)\\
&&\bigg(\prod_{m=1}^{n-1}K_{[m-1,m]}\bigg)^* w_0^{1/2}
\\
&=&w_0^{1/2}\bigg(\prod_{m=1}^{n-1}K_{[m-1,m]}\bigg)\prod_{x\in
\overrightarrow{W}_{n-1}}h_x\bigg(\prod_{m=1}^{n-1}K_{[m-1,m]}\bigg)^*
w_0^{1/2}\\
&=&\cw_{n-1]}
\end{eqnarray*}

From the above argument and (\ref{eq1}), one can show that
$\cw_{n]}$ is density operator, i.e. $\tr(\cw_{n]})=1$.

Let us show that the defined state $\ffi^{(f)}_{w_0,{\bh}}$ is a
forward QMC. Indeed, define quasi-conditional expectations
$\ce_{\L_n^c}$ as follows:
\begin{eqnarray}\label{E-n1}
&&\hat\ce_{\L_1^c}(x_{[0})=\tr_{[1}(K_{[0,1]}w_0^{1/2}x_{[0}w_0^{1/2}K_{[0,1]}^*), \ \ x_{[0}\in \cb_{\L_0^c}\\
&&\label{E-n2}
\ce_{\L_k^c}(x_{[k-1})=\tr_{[n}(K_{[k-1,k]}x_{[k-1}K_{[k-1,k]}^*), \
\ x_{[k-1}\in\cb_{\L_{k-1}^c}, \ \ k=1,2,\dots,n+1,
\end{eqnarray}
here $\tr_{[n}=\tr_{\L_n^c}$. Then for any monomial $a_{\L_1}\otimes
a_{W_2}\otimes\cdots\otimes a_{W_{n}}\otimes\id_{W_{n+1}}$, where
$a_{\L_1}\in\cb_{\L_1},a_{W_k}\in\cb_{W_k}$, ($k=2,\dots,n$), we
have
\begin{eqnarray}\label{E-n2}
\ffi^{(n,f)}_{w_0,\bh}(a_{\L_1}\otimes a_{W_2}\otimes\cdots\otimes
a_{W_{n}})&=&\tr\bigg(\bh_{n+1}K_{[n,n+1]}^*\cdots
K_{[0,1]}^*w_0^{1/2}(a_{\L_1}\otimes a_{W_2}\otimes\cdots\otimes
a_{W_{n}})\nonumber\\
&&w_0^{1/2}K_{[0,1]}\cdots K_{[n,n+1]}\bigg)\nonumber\\
&=&\tr_{[1}\bigg(\bh_{n+1}K_{[n,n+1]}^*\cdots
K^*_{[1,2]}\hat\ce_{\L_1^c}(a_{\L_1})a_{W_2}K_{[1,2]}\nonumber\\
&&\cdots a_{W_{n}}K_{[n,n+1]}\bigg)\nonumber\\
&=&\tr_{[n+1}\big(\bh_{n+1}\ce_{\L_{n+1}^c}\circ\ce_{\L_n^c}\circ\cdots\nonumber\\
&& \ce_{\L_2^c}\circ\hat\ce_{\L_n^c}(a_{\L_1}\otimes
a_{W_2}\otimes\cdots\otimes a_{W_{n}})\big).
\end{eqnarray}

Hence, for any $a\in\L\subset\L_{n+1}$ from \eqref{ffi-ff} with
\eqref{Kn1},\eqref{Kn2} \eqref{E-n1},\eqref{E-n2} one can see that
\begin{equation}\label{qqq}
\ffi^{(n,f)}_{w_0,\bh}(a)=\tr_{[n+1}\big(\bh^{n+1}\ce_{\L_{n+1}^c}\circ\ce_{\L_n^c}\circ\cdots
\ce_{\L_2^c}\circ\hat\ce_{\L_n^c}(a)\big).
\end{equation}
The projectivity of ${\mathcal{W}}_{n]}$ yields the equality
\eqref{eq4.1re} for $\ffi^{(n,f)}_{w_0,\bh}$, therefore, from
\eqref{qqq} we conclude that $\ffi^{(f)}_{w_0,\bh}$ is a forward
QMC.
\end{proof}

\begin{corollary}\label{state^nwithW_n}
If \eqref{eq1},\eqref{eq2} are satisfied then one has
$\ffi^{(n,f)}_{w_0,\bh}(a)=\tr(\cw_{n]}(a))$ for any $a\in
\cb_{\Lambda_n}$.
\end{corollary}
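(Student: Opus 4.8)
The plan is to derive this as a direct consequence of Theorem~\ref{compa}, specifically by exploiting the projectivity relation \eqref{pro} that was established there. First I would recall that in the proof of Theorem~\ref{compa} it was shown that $\tr_{n-1]}(\cw_{n]})=\cw_{n-1]}$ for every $n$, and iterating this we obtain $\tr_{m]}(\cw_{n]})=\cw_{m]}$ for every $m\le n$. The definition \eqref{ffi-ff} of the functional is $\ffi^{(n,f)}_{w_0,\bh}(a)=\tr(\cw_{n+1]}(a\otimes\id_{W_{n+1}}))$ for $a\in\cb_{\L_n}$, so the content of the corollary is precisely that one may replace $\cw_{n+1]}$ by $\cw_{n]}$ in this formula.

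The key step is a standard partial-trace manipulation: for $a\in\cb_{\L_n}$ we have
\begin{eqnarray*}
\ffi^{(n,f)}_{w_0,\bh}(a)=\tr\big(\cw_{n+1]}(a\otimes\id_{W_{n+1}})\big)
=\tr\big(\tr_{n]}(\cw_{n+1]}(a\otimes\id_{W_{n+1}}))\big)
=\tr\big(\tr_{n]}(\cw_{n+1]})\,a\big)
=\tr(\cw_{n]}\,a),
\end{eqnarray*}
where in the second equality I used that $\tr=\tr\circ\tr_{n]}$ (the normalized trace factors through any normalized partial trace), in the third equality I used the module property of the partial trace $\tr_{n]}(X(a\otimes\id_{W_{n+1}}))=\tr_{n]}(X)\,a$ valid since $a\in\cb_{\L_n}$ commutes with the conditional-expectation structure onto $\cb_{\L_n}$, and in the last equality I invoked the projectivity $\tr_{n]}(\cw_{n+1]})=\cw_{n]}$ coming from \eqref{pro}. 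One small bookkeeping point to be careful about is the indexing convention: $\tr_{n]}=\tr_{\L_n}$ maps $\cb_L\to\cb_{\L_n}$, and $\cw_{n+1]}\in\cb_{\L_{n+1}}$ while $a\otimes\id_{W_{n+1}}\in\cb_{\L_{n+1}}$, so the computation takes place entirely inside $\cb_{\L_{n+1}}$ and the module identity applies cleanly.

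I expect no serious obstacle here; the only thing requiring a line of justification is the module property of the normalized partial trace, namely that $\tr_{n]}$ is a $\cb_{\L_n}$-bimodule map, which is immediate from the tensor-product structure $\cb_{\L_{n+1}}=\cb_{\L_n}\otimes\cb_{W_{n+1}}$ and the definition of $\tr_{n]}$ as $\mathrm{id}\otimes\tr_{W_{n+1}}$. If one prefers an even shorter route, the statement also follows directly from the compatibility condition \eqref{compatibility} proved in Theorem~\ref{compa}: since $\ffi^{(n,f)}_{w_0,\bh}=\ffi^{(n+1,f)}_{w_0,\bh}\lceil_{\cb_{\L_n}}$ and, by the same projectivity argument applied at level $n$ instead of $n+1$, one has $\ffi^{(m,f)}_{w_0,\bh}(a)=\tr(\cw_{m]}(a\otimes\id))$ reducing recursively to $\tr(\cw_{n]}a)$ once $a\in\cb_{\L_n}$. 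Either way the proof is a two- or three-line computation, so in the writeup I would simply present the displayed chain of equalities above and cite \eqref{pro}.
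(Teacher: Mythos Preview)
Your proposal is correct and matches the paper's approach: the paper states this as an immediate corollary with no separate proof, relying precisely on the projectivity relation $\tr_{n-1]}(\cw_{n]})=\cw_{n-1]}$ established in the proof of Theorem~\ref{compa}. Your displayed chain of equalities using $\tr=\tr\circ\tr_{n]}$ and the $\cb_{\L_n}$-bimodule property of the partial trace is exactly the intended two-line justification.
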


\begin{remark}  Note that if $k=1$ and $h_x=I$ for all $x\in L$, then
we get conditional amplitudes introduced by L.Accardi
\cite{[AcFr80]}.
\end{remark}

Observe that the state $\ffi^{(f)}_{w_0,\bh}$ has a backward
structure. Indeed, let us first define
\begin{equation}\label{T-n1}
T_k[X](y)=\tr_{k]}(K_{[k,k+1]}XK_{[k,k+1]}^*y), \ \
X\in\cb_{\L_{[k,k+1]}},y\in\cb_{W_{k+1}}.
\end{equation}
Then, using Corollary \ref{state^nwithW_n} one finds
\begin{eqnarray}\label{T-n2}
\ffi^{(n,f)}_{w_0,\bh}(a_{\L_0}\otimes a_{W_1}\otimes\cdots\otimes
a_{W_{n}})&=&\tr\bigg(w_0^{1/2}K_{[0,1]}\cdots
K_{[n-1,n]}\bh_{n}K_{[n-1,n]}^*\cdots
K_{[0,1]}^*w_0^{1/2}\nonumber\\
&&(a_{\L_0}\otimes a_{W_1}\otimes\cdots\otimes a_{W_{n}})\bigg)\nonumber\\
&=&\tr\bigg(w_0^{1/2}K_{[0,1]}\cdots K_{[n-2,n-1]}
\tr_{n-1]}\big(K_{[n-1,n]}\bh_{n}K_{[n-1,n]}^*a_{W_n}\big)\nonumber\\
&&K_{[n-2,n-1]}^*a_{W_{n-1}}\cdots
K_{[0,1]}^*a_{W_1}w_0^{1/2}a_{\L_0}\bigg)\nonumber\\
&=&\tr\bigg(w_0^{1/2}K_{[0,1]}\cdots
K_{[n-3,n-2]}T_{n-2}[T_{n-1}[\bh_{n}](a_{W_n})](a_{W_{n-2}})\nonumber\\
&&K_{[n-3,n-2]}a_{W_{n-2}} \cdots
K_{[0,1]}^*a_{W_1}w_0^{1/2}a_{\L_0}\bigg)\nonumber\\
&=&\tr\bigg(w_0^{1/2}T_0[T_1[\cdots[T_{n-1}[\bh_{n}](a_{W_n})]\nonumber\\
&&(a_{W_{n-2}}) \cdots](a_{W_1})]w_0^{1/2}a_{\L_0}\bigg)
\end{eqnarray}

\begin{remark} Formula \eqref{T-n2} reminds the structure of a
backward quantum Markov chain, however there is an important
difference. For any positive $X\in{\cb}_{\L_{[n-1,n]}}$ the maps
$$
a_n\in{\cb}_{W_n}\mapsto \tr_{n-1]}(K_{[n-1,n]}XK_{[n-1,n]}^*a_n)=:
E(a_n)\in{\cb}_{W_{n-1}}
$$
will be in general anti--CP rather than CP (i.e. the map
$E^*(x):=E(x^*)$ is CP) (see \cite{Ac87}). We will show elsewhere
that there is indeed a deep connection between the present
construction and Cecchini's $\lambda$--operator \cite{Cec89},
\cite{CePe91}.
\end{remark}

\begin{remark} Note that the above construction has the advantage to work on arbitrary local
algebras. It generalizes the construction in \cite{Ac87}. Under
additional assumptions, the local structure of the state becomes
more transparent.  It also exhibits a "forward" local structure
which, however, is not backward Markovian in the sense of Definition
\ref{QMCdef}, but rather in the sense of Cecchini \cite{Cec89}. The
duality between a "forward" and "backward" Markovianity, emerging
from \eqref{qqq}, \eqref{T-n2} is a nontrivial quantum extension of
the fact that in a classical framework the two notions are
equivalent and seems to deserve a deeper study.
\end{remark}

%\begin{definition}
%We say that there exists a phase transition for a family of
%operators $\{K_{<x,y>}\}$ if \eqref{eq1} and \eqref{eq2} have at
%least two $(w_0,\{h_x\}_{x\in L})$ and $(\bar
%w_0,\{\bar{h}_x\}_{x\in L})$ solutions such that the corresponding
%Quantum $d$-Markov chains $\ffi_{w_0,\bh}$ and $\ffi_{\bar
%w_0,\bar\bh}$ are disjoint. Otherwise, we say there is no phase
%transition.
%\end{definition}

%%%%%%%%%%%%%%%%%%%%%%%%%%%%%%%%%%%%%%%%%%55     exam1

\section{Forward QMC associated with XY-model}\label{exam1}

In this section, we prove uniqueness of the quantum $d$-Markov chain
associated with $XY$-model on a Cayley tree of order two. In what
follows, we consider a semi-infinite Cayley tree $\G^2=(L,E)$ of
order 2. Our starting $C^{*}$-algebra is the same $\cb_L$ but with
$\cb_{x}=M_{2}(\bc)$ for $x\in L$. By
$\s_x^{(u)},\s_y^{(u)},\s_z^{(u)}$ we denote the Pauli spin
operators for at site $u\in L$. Here
\begin{equation}\label{pauli}
\s_x^{(u)}= \left(
          \begin{array}{cc}
            0 & 1 \\
            1 & 0 \\
          \end{array}
        \right), \quad
\s_y^{(u)}= \left(
          \begin{array}{cc}
            0 & -i \\
            i & 0 \\
          \end{array}
        \right), \quad
\s_z^{(u)}= \left(
          \begin{array}{cc}
            1 & 0 \\
            0 & -1 \\
          \end{array}
        \right).
\end{equation}

%By $e_{ij}^{(x)}$ we denote the standard matrix units of $\cb_{x} =
%M_2(\bc)$.

For every edge $<u,v>\in E$ put
\begin{equation}\label{1Kxy1}
K_{<u,v>}=\exp\{\b H_{<u,v>}\}, \ \ \b>0,
\end{equation}
where
\begin{equation}\label{1Hxy1}
H_{<u,v>}=\frac{1}{2}\big(\s_{x}^{(u)}\s_{x}^{(v)}+\s_{y}^{(u)}\s_{y}^{(v)}\big).
\end{equation}

Now taking into account the following equalities
\begin{eqnarray*}\label{1Hxy2}
&&H_{<u,v>}^{2m}=H_{<u,v>}^2=\frac{1}{2}\big(\id-\s_{z}^{(u)}\s_{z}^{(v)}\big),\
\ \
%\label{1Hxy3}
H_{<u,v>}^{2m-1}=H_{<u,v>}, \ \ \ m\in\bn,
\end{eqnarray*}
one finds
$$K_{<u,v>}=\id+\sinh\beta H_{<u,v>}+(\cosh\beta-1)H^2_{<u,v>}.$$

We are going to describe all solutions $\bh=\{h_x\}$ and $w_0$ of
the equations \eqref{eq1},\eqref{eq2}. Furthermore, we shall assume
that $h_x=h_y$ for every $x,y\in W_n$, $n\in\bn$. Hence, we denote
$h_x^{(n)}:=h_x$, if $x\in W_n$. Now from
\eqref{1Kxy1},\eqref{1Hxy1} one can see that
$K_{<u,u>}=K^{*}_{<u,v>}$, therefore, the equation \eqref{eq2} can
be rewritten as follows
\begin{eqnarray}\label{state}
\tr_x(K_{<x,y>}K_{<x,z>}h^{(n+1)}_yh^{(n+1)}_zK_{<x,z>}K_{<x,y>})&=&h_x^{(n)},
\ \ \textrm{for every} \ x\in L.
\end{eqnarray}

After small calculations the equation \eqref{state} reduces to the
following system of equations
\begin{equation}\label{mainsystem}
\left\{
\begin{array}{r}
   \left(\dfrac{a^{(n+1)}_{11}+a^{(n+1)}_{22}}{2}\right)^2\cosh^4\beta+a^{(n+1)}_{12}a^{(n+1)}_{21}\sinh^2\beta\cosh\beta = a^{(n)}_{11} \\
   a^{(n+1)}_{12}\dfrac{a^{(n+1)}_{11}+a^{(n+1)}_{22}}{2}\sinh\beta\cosh\beta(1+\cosh\beta)= a^{(n)}_{12} \\
   a^{(n+1)}_{21}\dfrac{a^{(n+1)}_{11}+a^{(n+1)}_{22}}{2}\sinh\beta\cosh\beta(1+\cosh\beta)= a^{(n)}_{21} \\
   \left(\dfrac{a^{(n+1)}_{11}+a^{(n+1)}_{22}}{2}\right)^2\cosh^4\beta+a^{(n+1)}_{12}a^{(n+1)}_{21}\sinh^2\beta\cosh\beta = a^{(n)}_{22}
\end{array}
\right.
\end{equation}
here
\begin{equation*}
h_{x}^{(n)}=\left(
          \begin{array}{cc}
            a^{(n)}_{11} & a^{(n)}_{12} \\
            a^{(n)}_{21} & a^{(n)}_{22} \\
          \end{array}
        \right), \quad\quad
h_{y}^{(n+1)}=h_{z}^{(n+1)}=\left(
          \begin{array}{cc}
            a^{(n+1)}_{11} & a^{(n+1)}_{12} \\
            a^{(n+1)}_{21} & a^{(n+1)}_{22} \\
          \end{array}
        \right).
\end{equation*}
From  \eqref{mainsystem} we immediately get that
$a^{(n)}_{11}=a^{(n)}_{22}$ for all $n\in \bn$.

Self-adjointness of $h_x^{(n)}$, i.e.
$\overline{a^{(n)}_{12}}=a^{(n)}_{21},$ for any $n\in \bn$, allows
us to reduce the system \eqref{mainsystem} to
\begin{equation}\label{equationtohxn}
\left\{
\begin{array}{r}
(a^{(n+1)}_{11})^2\cosh^4\beta+|a^{(n+1)}_{12}|^2\sinh^2\beta\cosh\beta
= a^{(n)}_{11}\\
a^{(n+1)}_{12}a^{(n+1)}_{11}\sinh\beta\cosh\beta(1+\cosh\beta) =
a^{(n)}_{12}
\end{array}
\right.
\end{equation}

\begin{remark}\label{positivityofhxn} Note that according to positivity and
invertability of $h_x^{(n)}$ we conclude that
$a_{11}^{(n)}a_{22}^{(n)}>|a_{12}^{(n)}|^2$ for all $n\in\bn.$
\end{remark}

Now we are going to investigate the derive system
\eqref{equationtohxn}. To do this, let us define a mapping
$f:(x,y)\in \br_+\times\bc\to(x{'},y{'})\in\br_+\times\bc$ by

\begin{equation}\label{formofmapf}
\left\{
\begin{array}{r}
(x{'})^2\cosh^4\beta+|y{'}|^2\sinh^2\beta\cosh\beta = x\\
x{'} y{'}\sinh\beta\cosh\beta(1+\cosh\beta) = y,
\end{array}
\right.
\end{equation}
here as before $\beta>0.$

Taking from both sides of the second equation of \eqref{formofmapf}
modules, we get
\begin{equation*}
\left\{
\begin{array}{r}
(x{'})^2\cosh^4\beta+|y{'}|^2\sinh^2\beta\cosh\beta = x\\
x{'} |y{'}|\sinh\beta\cosh\beta(1+\cosh\beta) = |y|.
\end{array}
\right.
\end{equation*}

Therefore, in the sequel we shall consider the following dynamical
system $f:(x,y)\in\br^2_{+}\to(x{'},y{'})\in\br^2_{+}$ given by
\begin{equation}\label{dynsystem}
\left\{
\begin{array}{r}
(x{'})^2\cosh^4\beta+(y{'})^2\sinh^2\beta\cosh\beta = x\\
x{'} y{'}\sinh\beta\cosh\beta(1+\cosh\beta) = y.
\end{array}
\right.
\end{equation}

Furthermore, due Remark \ref{positivityofhxn}, we restrict the
dynamical system \eqref{dynsystem} to the following domain
$$\Delta=\{(x,y)\in \br^2_{+}: x > y\}.$$

Further, we will need the following auxiliary fact:

\begin{lemma}\label{inequality}
If $\beta>0,$ then
$$0<\sinh\beta\cosh\beta(1+\cosh\beta)<\cosh^4\beta.$$
\end{lemma}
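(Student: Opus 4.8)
The plan is to treat the two inequalities separately. The left-hand inequality $0<\sib\,\cob(1+\cob)$ is immediate: for $\b>0$ each of the three factors $\sib$, $\cob$, $1+\cob$ is strictly positive, hence so is their product. All the work is in the right-hand inequality $\sib\,\cob(1+\cob)<\cosh^4\beta$.

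For that, I would first divide through by $\cob>0$, reducing the claim to $\sib(1+\cob)<\cosh^3\beta$. Then I would invoke the identity $\cobte=1+\sibt$ to rewrite $\cosh^3\beta=\cob\,\cobte=\cob(1+\sibt)$, so that after transposing everything to one side the inequality becomes equivalent to
\[
\cob\big(\sibt-\sib+1\big)-\sib>0 .
\]
To establish this I would use two elementary observations: first, the quadratic $t\mapsto t^2-t+1$ has negative discriminant, so $\sibt-\sib+1>0$ for every real $\b$ (indeed $\sibt-\sib+1=(\sib-\tfrac12)^2+\tfrac34$); second, $\cob>1$ strictly whenever $\b>0$. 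Combining these, $\cob(\sibt-\sib+1)>\sibt-\sib+1$, and therefore
\[
\cob\big(\sibt-\sib+1\big)-\sib \;>\; (\sibt-\sib+1)-\sib \;=\; \sibt-2\sib+1 \;=\;(\sib-1)^2\;\ge 0 ,
\]
which is exactly what is needed.

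Since this is a short computation in hyperbolic identities, there is no substantial obstacle; the only point that demands a little attention is the bookkeeping of strictness. The terminal bound $(\sib-1)^2$ can vanish (precisely when $\sinh\beta=1$), so the strict inequality must be extracted from the step $\cob>1$, which is where the hypothesis $\b>0$ is genuinely used; one should make sure not to weaken that to $\cob\ge 1$. (Alternatively, one could substitute $u=e^{\b}>1$ and check that the resulting rational inequality in $u$ reduces to a manifestly positive expression, but the argument above is cleaner.)
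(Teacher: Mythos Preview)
Your proof is correct, and it is considerably cleaner than the paper's own argument. Both proofs begin the same way: the left-hand inequality is immediate from positivity of the three factors, and for the right-hand one both divide through by $\cosh\beta$ to reduce to $\sinh\beta(1+\cosh\beta)<\cosh^3\beta$. From there the approaches diverge. The paper substitutes $t=e^{\beta}$, clears denominators to obtain the polynomial inequality $t^6-2t^5-t^4+7t^2+2t+1>0$ for $t>1$, and then verifies this by a four-case analysis (splitting at $t=1+\sqrt{2}$, $t=2$, and $t=\sqrt{7/2}$), in each case rewriting the polynomial as a sum of manifestly positive pieces. Your argument instead exploits the identity $\cosh^2\beta=1+\sinh^2\beta$ to rewrite the difference as $\cosh\beta(\sinh^2\beta-\sinh\beta+1)-\sinh\beta$, and then uses the single estimate $\cosh\beta>1$ (together with positivity of the quadratic $s^2-s+1$) to bound this below by $(\sinh\beta-1)^2\ge 0$, with strictness coming from the step $\cosh\beta>1$. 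This avoids the substitution and the case split entirely; amusingly, you even allude to the paper's route as the less attractive alternative. The paper's method has the minor virtue of being purely polynomial once the substitution is made, but your hyperbolic manipulation is shorter, requires no casework, and makes the role of the hypothesis $\beta>0$ transparent.
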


The proof is provided in Appendix.

 Let us first find all of the fixed points of \eqref{dynsystem}.

\begin{theorem}
Let $f$ be a dynamical system given by \eqref{dynsystem}. Then the
following assertions hold true:
\begin{enumerate}
\item[(i)] there is a unique fixed point of $f$ in the
domain $\Delta$;

\item[(ii)] the dynamical system $f$ does not have any
$k$ ( $k\ge 2$)  periodic points in the domain $\Delta.$
\end{enumerate}
\end{theorem}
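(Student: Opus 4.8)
The plan is to analyze the dynamical system \eqref{dynsystem} on $\Delta$ by reducing the two-dimensional problem to a scalar one. First I would introduce abbreviations $A=\cosh^4\beta$, $B=\sinh^2\beta\cosh\beta$, $C=\sinh\beta\cosh\beta(1+\cosh\beta)$, so that the system reads $A(x')^2+B(y')^2=x$, $Cx'y'=y$; note by Lemma~\ref{inequality} that $0<C<A$, and one checks $C^2=AB$ (since $C^2=\sinh^2\beta\cosh^2\beta(1+\cosh\beta)^2$ and $AB=\cosh^4\beta\sinh^2\beta\cosh\beta=\sinh^2\beta\cosh^5\beta$ — I would verify the precise identity or the needed inequality carefully here). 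A fixed point satisfies $Ax^2+By^2=x$ and $Cxy=y$. The second equation gives either $y=0$ or $Cx=1$. The case $y=0$ forces $Ax^2=x$, i.e. $x=1/A=\cosh^{-4}\beta<1=$ boundary value of $x>y=0$, giving one candidate fixed point $(\cosh^{-4}\beta,0)$ on the closure; I must check whether it lies in the open domain $\Delta=\{x>y\}$, which it does since $\cosh^{-4}\beta>0$. The case $Cx=1$ gives $x=1/C$; substituting into $Ax^2+By^2=x$ yields $y^2=(x-Ax^2)/B=x(1-Ax)/B=(1/C)(1-A/C)/B$, which is negative because $A>C$, so this case produces no fixed point in $\br^2_+$. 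Hence uniqueness of the fixed point in $\Delta$ follows, proving (i).

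For part (ii) I would rule out genuine $k$-cycles ($k\ge2$) by a monotonicity/contraction argument. The key observation is that along any orbit the ratio $u=y/x$ (or, better, the quantity $y^2/x$) evolves in a controlled monotone way: from the two defining relations, $x=A(x')^2+B(y')^2\ge A(x')^2$ and also $y=Cx'y'$, so $y^2/x=C^2(x')^2(y')^2/(A(x')^2+B(y')^2)$. Writing $t'=(y')^2/(x')^2$ this becomes $y^2/x=\dfrac{C^2(y')^2}{A+Bt'}=\dfrac{C^2 x' t' \cdot x'}{A+Bt'}$; the cleaner route is to track $s=(y/x)^2$ under the inverse map. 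I would instead argue directly on $f$: show that $f$ maps $\Delta$ into a region where it is strictly contracting in a suitable metric, or show that the scalar quantity $r_n=(y_n/x_n)^2$ satisfies a one-dimensional recursion $r_{n-1}=g(r_n)$ with $g$ strictly monotone on $[0,1)$ having only the fixed point $r=0$ (corresponding to $y=0$) as a bounded invariant point, so no periodic orbit with $r>0$ can exist; combined with the fact that once $r=0$ the $x$-coordinate satisfies the strictly contracting scalar map $x\mapsto Ax^2$ backwards, i.e. $x'=\sqrt{x/A}$, which has the single fixed point $1/A$, this excludes all $k$-cycles.

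The step I expect to be the main obstacle is establishing the strict monotonicity of the induced scalar map $g$ (equivalently, proving that $f$ has no $2$-cycles, from which higher cycles are then typically excluded by the same monotone-conjugacy argument, since a continuous strictly monotone interval map has no periodic points of period $\ge2$). Concretely: invert \eqref{dynsystem} to express $(x',y')$ as functions of $(x,y)$ — the second equation gives $y'=y/(Cx')$, and substituting into the first gives $A(x')^2+B y^2/(C^2 (x')^2)=x$, a quadratic in $(x')^2$, solvable explicitly; then form $r'=(y'/x')^2=y^2/(C^2 (x')^4)$ as a function of $r=(y/x)^2$ and show $dr'/dr>0$ (or $<0$, either monotonicity suffices) throughout $[0,1)$ using $0<C<A$. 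Once that monotonicity and the boundary behavior ($r'\to$ something $<1$, keeping orbits in $\Delta$) are in hand, Sharkovskii-type reasoning for monotone interval maps — or simply the elementary fact that an increasing map has no periodic orbits other than fixed points and a decreasing map has period-2 points only if $g\circ g$ has extra fixed points, which a direct computation excludes — finishes (ii). The inequality $0<C<A$ from Lemma~\ref{inequality}, together with $x>y$ on $\Delta$, is what makes all these estimates go through.
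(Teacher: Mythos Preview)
Your treatment of (i) is correct and matches the paper's argument exactly: split on $y=0$ versus $y\neq 0$, use $A>C$ (Lemma~\ref{inequality}) to rule out the second case, and land on the unique fixed point $(\cosh^{-4}\beta,0)$. One remark: the identity $C^2=AB$ you tentatively write is \emph{false} in general (it would force $(1+\cosh\beta)^2=\cosh^3\beta$), but you never actually use it, so no harm done.

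For (ii) your instinct to track a ratio is right, but you are making the argument harder than it needs to be. You propose to invert \eqref{dynsystem}, build an induced one-dimensional map on $r=(y/x)^2$, prove it is strictly monotone, and then invoke the fact that monotone interval maps have no nontrivial periodic orbits (or handle the decreasing case separately). That route can be made to work --- indeed $(x')^2=x\cdot g(r)$ depends on $r$ only, so $r'=r/(C^2g(r)^2)$ really is a scalar recursion --- but the monotonicity check and the Sharkovskii-style reasoning are unnecessary machinery.

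The paper's argument is a one-line ratio inequality obtained \emph{directly} from \eqref{dynsystem} without inverting anything. If $y^{(i)}>0$ then
\[
\frac{x^{(i)}}{y^{(i)}}=\frac{A(x^{(i+1)})^2+B(y^{(i+1)})^2}{Cx^{(i+1)}y^{(i+1)}}
=\frac{A}{C}\cdot\frac{x^{(i+1)}}{y^{(i+1)}}+\frac{B}{C}\cdot\frac{y^{(i+1)}}{x^{(i+1)}}
>\frac{A}{C}\cdot\frac{x^{(i+1)}}{y^{(i+1)}},
\]
and since $A/C=\cosh^3\beta/(\sinh\beta(1+\cosh\beta))>1$ by Lemma~\ref{inequality}, iterating around a $k$-cycle gives $x^{(0)}/y^{(0)}>(A/C)^k\,x^{(0)}/y^{(0)}$, a contradiction. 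The case $y^{(0)}=0$ forces all $y^{(i)}=0$ and reduces to the scalar recursion $(x^{(i+1)})^2A=x^{(i)}$, whose only cycle is the fixed point. So no monotonicity of an induced map, no quadratic inversion, and no interval-dynamics theory are needed --- just drop the positive second term and iterate a strict inequality.
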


\begin{proof} (i).
Assume that $(x,y)$ is a fixed point, i.e.
\begin{equation}\label{1fix} \left\{
\begin{array}{r}
x^2\cosh^4\beta+y^2\sinh^2\beta\cosh\beta = x\\
x y\sinh\beta\cosh\beta(1+\cosh\beta) = y.
\end{array}
\right.
\end{equation}

Consider two different cases with respect to $y$.

{\sc Case (a).} Let $y=0.$ Then one finds that either $x=0$ or
$x=\frac{1}{\cosh^4\beta}.$ But, only the point
$(\frac{1}{\cosh^4\beta},0)$ belongs to the domain $\Delta.$\\

{\sc Case (b).} Now suppose $y\neq 0.$ Then from \eqref{1fix} one
finds
\begin{eqnarray*}
x=\frac{1}{\sinh\beta\cosh\beta(1+\cosh\beta)},
\end{eqnarray*}
hence, we obtain
\begin{eqnarray*}
y^2\sinh^2\beta\cosh\beta =
\frac{\sinh\beta\cosh\beta(1+\cosh\beta)-\cosh^4\beta}{\sinh^2\beta\cosh^2\beta(1+\cosh\beta)^2}.
\end{eqnarray*}
But, due to  Lemma \ref{inequality},  we infer that
$$\frac{\sinh\beta\cosh\beta(1+\cosh\beta)-\cosh^4\beta}{\sinh^2\beta\cosh^2\beta(1+\cosh\beta)^2}<0$$
which is impossible. Therefore, in this case the dynamical system
does not have any fixed point.

Consequently, the dynamical system  has a unique fixed point which
is equal to $(\frac{1}{\cosh^4\beta}, 0).$

(ii). Now let us turn to study periodic points of the dynamical
system \eqref{dynsystem}. Assume that the system  has a periodic
point $(x^{(0)},y^{(0)})$ with a period of $k\ge 2$ in $\Delta.$
This means that there are points
$$(x^{(0)},y^{(0)}),(x^{(1)},y^{(1)}),\dots,(x^{(k-1)},y^{(k-1)})\in \Delta,$$
such that they satisfy the following equalities
\begin{equation}\label{conforperiodic}
\left\{
\begin{array}{r}
(x^{(i+1)})^2\cosh^4\beta+(y^{(i+1)})^2\sinh^2\beta\cosh\beta = x^{(i)}\\
x^{(i+1)} y^{(i+1)} \sinh\beta\cosh\beta(1+\cosh\beta) = y^{(i)},
\end{array}
\right.
\end{equation}
where $i=\overline{0,k-1},$ i.e.
$f(x^{(i)},y^{(i)})=(x^{(i+1)},y^{(i+1)}),$ with  $x^{(k)}=x^{(0)},$
$y^{(k)}=y^{(0)}.$

Now again consider two different cases with respect to $y^{(0)}$.\\

{\sc Case (a).} Let $y^{(0)}\neq 0.$ Then $x^{(i)},y^{(i)}$ should
be positive for all $i=\overline{0,k-1}.$ Therefore, we have
\begin{eqnarray*}
\frac{x^{(i)}}{y^{(i)}} &=&
\frac{\bigg(\frac{x^{(i+1)}}{y^{(i+1)}}\bigg)^2\cosh^4\beta+\sinh^2\beta\cosh\beta}{\frac{x^{(i+1)}}{y^{(i+1)}}\sinh\beta\cosh\beta(1+\cosh\beta)}\\
&=&
\frac{\cosh^3\beta}{\sinh\beta(1+\cosh\beta)}\cdot\frac{x^{(i+1)}}{y^{(i+1)}}+\frac{\sinh\beta}{1+\cosh\beta}\cdot\frac{y^{(i+1)}}{x^{(i+1)}},
\end{eqnarray*}
where $i=\overline{0,k-1}.$

Due to  $x^{(i)},y^{(i)}>0$ for all $i=\overline{0,k-1},$ we obtain
\begin{eqnarray}\label{ineqperoid}
\frac{x^{(i)}}{y^{(i)}} >
\frac{\cosh^3\beta}{\sinh\beta(1+\cosh\beta)}\cdot\frac{x^{(i+1)}}{y^{(i+1)}},
\end{eqnarray}
for all  $i=\overline{0,k-1}.$

It then follows  from \eqref{ineqperoid} that
\begin{eqnarray*}
\frac{x^{(0)}}{y^{(0)}}>
\left(\frac{\cosh^3\beta}{\sinh\beta(1+\cosh\beta)}\right)^{k}\cdot\frac{x^{(0)}}{y^{(0)}}.
\end{eqnarray*}
But, the last inequality impossible, since Lemma \ref{inequality}
implies
$$\frac{\cosh^3\beta}{\sinh\beta(1+\cosh\beta)}>1.$$
Hence, in this case, the dynamical system \eqref{dynsystem} does
not have any periodic point with $k\geq 2$.\\

{\sc Case (b).} Now suppose that $y^{(0)}=0.$ Since $k\ge 2$ we have
$x^{(0)}\neq \frac{1}{\cosh^4\beta}.$ So, from
\eqref{conforperiodic} we find that $y^{(i)}=0$ for all
$i=\overline{0,k-1}.$ Then again \eqref{conforperiodic} implies that
$$(x^{(i+1)})^2\cosh^4\beta=x^{(i)}, \ \ \ \forall
i=\overline{0,k-1},$$ which means
$$x^{(i+1)}=\frac{1}{\cosh^2\beta}\sqrt{x^{(i)}}, \ \ \ \forall
i=\overline{0,k-1}.$$ Hence, we have
$$x^{(0)}=\frac{1}{\cosh^{4}\beta}\sqrt[\leftroot{-2}\uproot{3}2^{k+1}]{x^{(0)}\cosh^4\beta}.$$
This yields either $x^{(0)}=0$ or $x^{(0)}= \frac{1}{\cosh^4\beta},$
which is a contradiction.
\end{proof}

Now, we would like to write the dynamical system \eqref{dynsystem}
in an explicit form.   To do end, we should solve the system of
equations \eqref{dynsystem} w.r.t. $(x{'},y{'}).$ From
\eqref{dynsystem} we get
\begin{equation*}
\left\{
\begin{array}{r}
(x{'})^2\cosh^4\beta+(y{'})^2\sinh^2\beta\cosh\beta = x\\[2mm]
 (x{'})^2
(y{'})^2\sinh^2\beta\cosh^2\beta(1+\cosh\beta)^2 = y^2.
\end{array}
\right.
\end{equation*}
Letting $(x{'})^2=u$ and $(y{'})^2=v$, one finds
\begin{equation*}
\left\{
\begin{array}{r}
u\cosh^4\beta+v\sinh^2\beta\cosh\beta = x\\
u v \sinh^2\beta\cosh^2\beta(1+\cosh\beta)^2 = y^2.
\end{array}
\right.
\end{equation*}
Then $v$ can be represented by $u$ as follows
\begin{eqnarray}\label{vviau}
v = \frac{x-u\cosh^4\beta}{\sinh^2\beta\cosh\beta}.
\end{eqnarray}
Using this, we obtain the following quadratic equation
\begin{eqnarray*}
\cosh^5\beta(1+\cosh\beta)^2\cdot
u^2-x\cosh\beta(1+\cosh\beta)^2\cdot u+y^2 = 0.
\end{eqnarray*}
Solving  such a equation w.r.t. $u,$ we can find
\begin{eqnarray*}
u_{\pm}=\frac{x\pm\sqrt{x^2-4y^2\cfrac{\cosh^3\beta}{(1+\cosh\beta)^2}}}{2\cosh^4\beta}
\end{eqnarray*}
Then from \eqref{vviau} one gets
$$v_{\pm}=\frac{x\mp\sqrt{x^2-4y^2\cfrac{\cosh^3\beta}{(1+\cosh\beta)^2}}}{2\sinh^2\beta\cosh\beta}.$$

Since the point $(x{'},y{'})$ belongs to the domain $\Delta,$ then
$u$ should be greater than $v.$ Therefore, an explicit form of
$f:\br^2_{+}\to\br^2_{+}$ given by \eqref{dynsystem} is the
following one\\
\begin{eqnarray}\label{maindynsystem}
\left\{
\begin{array}{l}
x^{'} =
\sqrt{\dfrac{x+\sqrt{x^2-4y^2\cfrac{\cosh^3\beta}{(1+\cosh\beta)^2}}}{2\cosh^4\beta}}\\
\\
y^{'} =
\sqrt{\dfrac{x-\sqrt{x^2-4y^2\cfrac{\cosh^3\beta}{(1+\cosh\beta)^2}}}{2\sinh^2\beta\cosh\beta}}.
\end{array}
\right.
\end{eqnarray}
\begin{remark}\label{confordomain}
Note that from \eqref{maindynsystem} one can see that the map $f$ is
well defined if and only if $x$ and $y$ satisfy
\begin{eqnarray}\label{conforxy}
x \ge 2y\sqrt{\frac{\cosh^3\beta}{(1+\cosh\beta)^2}}.
\end{eqnarray}
Moreover, in this case $f$ maps $\Delta$ into itself.
\end{remark}

\begin{lemma}\label{conditionforimage}
Let $f:\Delta\to\Delta$  be the dynamical system given by
\eqref{maindynsystem}. If $x,y$ are positive and satisfy
\eqref{conforxy} then $x{'},y{'}$ are positive and satisfy the
following inequality
\begin{eqnarray}\label{conforx'y'}
\frac{x{'}}{y{'}}<
\frac{\sinh\beta(1+\cosh\beta)}{\cosh^3\beta}\cdot\frac{x}{y}.
\end{eqnarray}
\end{lemma}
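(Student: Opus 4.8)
The plan is to sidestep the nested radicals in \eqref{maindynsystem} and argue instead from the polynomial identities \eqref{dynsystem}, which is where the inequality \eqref{conforx'y'} really comes from. First I would settle positivity. Since $x,y$ satisfy \eqref{conforxy}, the expression under the outer square roots in \eqref{maindynsystem}, namely $x^{2}-4y^{2}\cosh^{3}\beta/(1+\cosh\beta)^{2}$, is nonnegative, so $x'$ and $y'$ are well-defined nonnegative reals. As $x>0$, the numerator $x+\sqrt{x^{2}-4y^{2}\cosh^{3}\beta/(1+\cosh\beta)^{2}}$ in the formula for $x'$ is strictly positive, hence $x'>0$. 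As $y>0$, we have $x^{2}-4y^{2}\cosh^{3}\beta/(1+\cosh\beta)^{2}<x^{2}$, so $\sqrt{x^{2}-4y^{2}\cosh^{3}\beta/(1+\cosh\beta)^{2}}<x$, which makes the numerator in the formula for $y'$ strictly positive; therefore $y'>0$ as well.

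With $x',y'>0$ established, I would return to \eqref{dynsystem}. Its first equation, $(x')^{2}\cosh^{4}\beta+(y')^{2}\sinh^{2}\beta\cosh\beta=x$, together with $(y')^{2}\sinh^{2}\beta\cosh\beta>0$, gives the strict bound $(x')^{2}\cosh^{4}\beta<x$. Its second equation, $x'y'\sinh\beta\cosh\beta(1+\cosh\beta)=y$, lets me write $y'=y/\bigl(x'\sinh\beta\cosh\beta(1+\cosh\beta)\bigr)$, so that $x'/y'=(x')^{2}\sinh\beta\cosh\beta(1+\cosh\beta)/y$. Plugging in $(x')^{2}<x/\cosh^{4}\beta$ yields $x'/y'<x\sinh\beta\cosh\beta(1+\cosh\beta)/(\cosh^{4}\beta\,y)=\bigl(\sinh\beta(1+\cosh\beta)/\cosh^{3}\beta\bigr)(x/y)$, which is precisely \eqref{conforx'y'}.

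The one point requiring care is the strict positivity $y'>0$: if $y$ were zero the argument would collapse (then $y'=0$ and \eqref{conforx'y'} is meaningless), so the hypothesis that $y$ is positive is used essentially, while \eqref{conforxy} enters only to ensure that $f$ is actually defined at $(x,y)$ (cf. Remark \ref{confordomain}). Apart from that, the proof is a short manipulation of the two identities in \eqref{dynsystem} and needs no hard estimates — in particular Lemma \ref{inequality} is not invoked here.
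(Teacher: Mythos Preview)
Your proof is correct. It differs from the paper's in execution: the paper computes the ratio $x'/y'$ directly from the explicit formulas \eqref{maindynsystem}, simplifying the nested radicals to obtain
\[
\frac{x'}{y'}=\frac{\sinh\beta(1+\cosh\beta)}{\cosh^{3}\beta}\cdot\frac{x+\sqrt{x^{2}-4y^{2}\cosh^{3}\beta/(1+\cosh\beta)^{2}}}{2y},
\]
and then uses $\sqrt{x^{2}-\cdots}<x$ (valid since $y>0$) to conclude. You instead read off $(x')^{2}<x/\cosh^{4}\beta$ straight from the first equation of \eqref{dynsystem} and combine it with $x'/y'=(x')^{2}\sinh\beta\cosh\beta(1+\cosh\beta)/y$ from the second. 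The underlying mechanism is the same --- the strictly positive $(y')^{2}$ term is what produces the strict inequality --- but your route genuinely avoids ever manipulating the radicals, which makes it a bit cleaner; the paper's route, on the other hand, yields an exact closed form for $x'/y'$ that could be useful if one wanted sharper estimates.
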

\begin{proof} From \eqref{maindynsystem} one can see that if
$x,y$ are positive and satisfy the condition \eqref{conforxy}, then
$x{'},y{'}$ are positive as well. From \eqref{maindynsystem} we find
\begin{eqnarray*}
\frac{x{'}}{y{'}} &=&
\frac{\sinh\beta(1+\cosh\beta)}{\cosh^3\beta}\cdot\frac{x+\sqrt{x^2-4y^2\cfrac{\cosh^3\beta}{(1+\cosh\beta)^2}}}{2y}\\
&<&
%\frac{\sinh\beta(1+\cosh\beta)}{\cosh^3\beta}\cdot\frac{x+x}{2y}\\
%&=&
\frac{\sinh\beta(1+\cosh\beta)}{\cosh^3\beta}\cdot\frac{x}{y},
\end{eqnarray*}
which is the desired inequality.
\end{proof}
Now, we are going to study an asymptotical behavior of the
trajectory of the dynamical system  \eqref{maindynsystem}.

\begin{theorem}\label{trajec1}
Let  $f:\Delta\to\Delta$ be the dynamical system given by
\eqref{maindynsystem}. Then the following assertions hold true:
\begin{enumerate}
\item[(i)] if $y^{(0)}>0$ then the trajectory
$\{(x^{(n)},y^{(n)})\}_{n=1}^{\infty}$ of $f$ starting from the
point $(x^{(0)},y^{(0)})$ is finite.

\item[(ii)] if $y^{(0)}=0$ then the trajectory
$\{(x^{(n)},y^{(n)})\}_{n=1}^{\infty}$  starting from the point
$(x^{(0)},y^{(0)})$ has the following form
\begin{equation*}
\left\{
\begin{array}{l}
x^{(n)} = \cfrac{\sqrt[2^n]{x^{(0)}\cosh^4\beta}}{\cosh^4\beta}\\
y^{(n)} = 0.
\end{array}
\right.
\end{equation*}
\end{enumerate}
\end{theorem}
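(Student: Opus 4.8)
The plan is to dispose of the two cases of the statement separately: case (ii) is a direct computation followed by an induction, while case (i) is a contradiction argument built on the strict contraction of the ratio $x/y$ supplied by Lemma~\ref{conditionforimage} together with Lemma~\ref{inequality}.

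\textbf{Case (ii).} When $y^{(0)}=0$ the admissibility condition \eqref{conforxy} is trivially satisfied, and in \eqref{maindynsystem} the inner radical collapses, since $\sqrt{(x^{(0)})^2}=x^{(0)}$. Hence $y^{(1)}=0$ and $x^{(1)}=\sqrt{x^{(0)}/\cosh^4\beta}=\sqrt{x^{(0)}}/\cosh^2\beta>0$, so $(x^{(1)},y^{(1)})\in\Delta$ and the same reasoning applies again. Thus $y^{(n)}=0$ for every $n$ and $x^{(n)}$ obeys the scalar recursion $x^{(n)}=\sqrt{x^{(n-1)}}/\cosh^2\beta$. I would then establish the closed form $x^{(n)}=\dfrac{\sqrt[2^{n}]{x^{(0)}\cosh^4\beta}}{\cosh^4\beta}$ by induction on $n$: it holds for $n=0$, and assuming it for $n-1$,
\[
x^{(n)}=\frac{\sqrt{x^{(n-1)}}}{\cosh^2\beta}=\frac{1}{\cosh^2\beta}\cdot\frac{\sqrt[2^{n}]{x^{(0)}\cosh^4\beta}}{\cosh^{2}\beta}=\frac{\sqrt[2^{n}]{x^{(0)}\cosh^4\beta}}{\cosh^4\beta}.
\]

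\textbf{Case (i).} Put $\lambda:=\dfrac{\sinh\beta(1+\cosh\beta)}{\cosh^3\beta}$, so that $0<\lambda<1$ by Lemma~\ref{inequality}. Assume for contradiction that the trajectory issued from $(x^{(0)},y^{(0)})$ with $y^{(0)}>0$ is infinite, i.e.\ $f$ is applicable at every step. I first note that $y^{(n)}>0$ for all $n$: if $y^{(n)}>0$ and $f$ is defined at $(x^{(n)},y^{(n)})$, then $y^{(n+1)}=0$ would force, through \eqref{maindynsystem}, the equality $(x^{(n)})^2=(x^{(n)})^2-4(y^{(n)})^2\cosh^3\beta/(1+\cosh\beta)^2$, whence $y^{(n)}=0$ — a contradiction; induction finishes the claim. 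Moreover $x^{(n)}>0$ for all $n$, since $x^{(0)}>y^{(0)}>0$ and $x'>0$ whenever $x>0$ and $f$ is defined. As the trajectory survives past step $n$, condition \eqref{conforxy} holds at $(x^{(n)},y^{(n)})$, so Lemma~\ref{conditionforimage} applies at each step and yields $\dfrac{x^{(n+1)}}{y^{(n+1)}}<\lambda\,\dfrac{x^{(n)}}{y^{(n)}}$; iterating (and using $\lambda>0$), $\dfrac{x^{(n)}}{y^{(n)}}<\lambda^{n}\,\dfrac{x^{(0)}}{y^{(0)}}$ for every $n$. On the other hand, Remark~\ref{confordomain} shows that $f$ can be applied at $(x^{(n)},y^{(n)})$ only if $\dfrac{x^{(n)}}{y^{(n)}}\ge 2\sqrt{\dfrac{\cosh^3\beta}{(1+\cosh\beta)^2}}>0$. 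Since $\lambda^{n}\to0$, these two facts are incompatible for $n$ large; hence the trajectory cannot be infinite, i.e.\ it is finite.

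The algebra in case (ii) and the application of Lemma~\ref{conditionforimage} in case (i) are routine. The only point I expect to call for genuine attention is the domain bookkeeping in case (i): one must check that the trajectory can terminate \emph{only} by $f$ becoming inapplicable (i.e.\ \eqref{conforxy} being violated), so that "finite trajectory" is exactly the negation of the infinite-trajectory hypothesis, and that the second coordinate stays strictly positive once it has started positive (so that the dichotomy between cases (i) and (ii) is preserved along the orbit and Lemma~\ref{conditionforimage} is legitimately applicable at every step). Granting Lemmas~\ref{inequality} and~\ref{conditionforimage} and Remark~\ref{confordomain}, no substantial difficulty remains.
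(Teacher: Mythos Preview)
Your proof is correct and follows essentially the same route as the paper: contradiction via the iterated ratio bound from Lemma~\ref{conditionforimage} against the lower bound from Remark~\ref{confordomain} for part~(i), and direct iteration of the scalar recursion for part~(ii). Your extra care in verifying that $y^{(n)}>0$ persists along the orbit is a detail the paper leaves implicit but is indeed needed to invoke Lemma~\ref{conditionforimage} at each step.
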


\begin{proof} (i) Let $y^{(0)}>0$ and  suppose that the trajectory
$\{(x^{(n)},y^{(n)})\}_{n=1}^{\infty}$ of the dynamical system
starting from the point $(x^{(0)},y^{(0)})$ is infinite. This means
that the points $(x^{(n)},y^{(n)})$ are well defined for all
$n\in\bn.$ Then according to Remark \ref{confordomain} and Lemma
\ref{conditionforimage} we have
\begin{eqnarray}\label{conditionforxnandyn}
\frac{x^{(n)}}{y^{(n)}}<
\left(\frac{\sinh\beta(1+\cosh\beta)}{\cosh^3\beta}\right)^n\cdot\frac{x^{(0)}}{y^{(0)}}
\end{eqnarray} for all $n\in\bn.$

On the other hand, according to Remark \ref{confordomain}, $x^{(n)}$
and $y^{(n)}$ should satisfy the following inequality
\begin{eqnarray}\label{conforxnyn}
\frac{x^{(n)}}{y^{(n)}}\ge
2\sqrt{\frac{\cosh^3\beta}{(1+\cosh\beta)^2}},
\end{eqnarray}
for all $n\in\bn.$ Due to Lemma \ref{inequality} we find
$$\left(\frac{\sinh\beta(1+\cosh\beta)}{\cosh^3\beta}\right)^n\to
0 \ \  \textrm{as }  \ \  n\to\infty,$$  which with
\eqref{conditionforxnandyn} implies that the inequality
\eqref{conforxnyn} is not satisfied starting from some number
$N_0\in\bn.$ This contradiction shows that the trajectory
$\{(x^{(n)},y^{(n)})\}_{n=1}^{\infty}$ must be finite.\\

(ii) Now let $y^{(0)}=0$, then \eqref{maindynsystem} implies
$y^{(n)}=0$ for all $n\in\bn.$ Hence, from \eqref{maindynsystem} one
finds
$$x^{(n)}=\sqrt{\frac{x^{(n-1)}}{\cosh^4\beta}}.$$
So iterating last equality we obtain
$$x^{(n)}\cosh^4\beta=\sqrt[2^n]{x^{(0)}\cosh^4\beta},$$
which yields the desired equality.
\end{proof}

From the last Theorem \ref{trajec1}, we infer that the equation
\eqref{state} has a lot of parametrical solutions
$(w_0(\a),\{h_x(\a)\})$ given by
\begin{equation}\label{solutionofmainstate}
w_0(\alpha)=\left(
              \begin{array}{cc}
                \dfrac{1}{\alpha} & 0 \\
                0 & \dfrac{1}{\alpha} \\
              \end{array}
            \right),\quad
 h^{(n)}_x(\alpha)=\left(
                    \begin{array}{cc}
                      \dfrac{\sqrt[2^{n}]{\alpha\cosh^4\beta}}{\cosh^4\beta} & 0 \\
                      0 & \dfrac{\sqrt[2^{n}]{\alpha\cosh^4\beta}}{\cosh^4\beta} \\
                    \end{array}
                 \right),
                 \end{equation}
for every $x\in V.$, here $\alpha$ is any positive real number.

The boundary conditions corresponding to the fixed point of
\eqref{dynsystem} are the following ones:
\begin{equation}\label{solutionofmainstatewhenalphafixed}
w_0=\left(
               \begin{array}{cc}
                      {\cosh^4\beta} & 0 \\
                      0 & {\cosh^4\beta} \\
                    \end{array}
                  \right), \quad
                  h^{(n)}_x=\left(
                   \begin{array}{cc}
                \dfrac{1}{\cosh^4\beta} & 0 \\
                0 & \dfrac{1}{\cosh^4\beta} \\
              \end{array}
            \right), \ \ \forall x\in V,
            \end{equation}
which correspond to the value of $\alpha_0=\cfrac{1}{\cosh^4\beta}$
in \eqref{solutionofmainstate}. Therefore, further, we denote such
operators by $w_0\left(\alpha_0\right)$ and
$h_x^{(n)}\left(\alpha_0\right)$.

Let us consider the states $\ffi^{(n,f)}_{w_0(\a),\bh(\alpha)}$
corresponding to the solutions
$(w_0(\alpha),\{h_x^{(n)}(\alpha)\})$. By definition we have
\begin{eqnarray}\label{uniq-1}
\ffi^{(n,f)}_{w_0(\a),\bh(\alpha)}(x) &=&
\tr\left(w^{1/2}_{0}(\alpha)\prod_{i=0}^{n-1}K_{[i,i+1]}\prod_{x\in
\overrightarrow{W}_n}h^{(n)}_x(\alpha)
\prod_{i=1}^{n}K_{[n-i,n+1-i]}w^{1/2}_{0}(\alpha)x\right)\nonumber\\
&=&\frac{\left(\sqrt[2^{n+1}]{\alpha\cosh^4\beta}\right)^{2^{n+1}}}
{{\alpha}(\cosh^4\beta)^{2^{n+1}}}
\tr\left(\prod_{i=0}^{n-1}K_{[i,i+1]}\prod_{i=1}^{n}K_{[n-i,n+1-i]}x\right)\nonumber\\
&=&\frac{\alpha_0^{2^{n+1}}}{\alpha_0}\tr\left(\prod_{i=0}^{n-1}K_{[i,i+1]}\prod_{i=1}^{n}K_{[n-i,n+1-i]}x\right)\nonumber\\
&=&
\tr\left((w^{1/2}_{0}(\alpha_0)\prod_{i=0}^{n-1}K_{[i,i+1]}\prod_{x\in
\overrightarrow{W}_n}h^{(n)}_x(\alpha_0)
\prod_{i=1}^{n}K_{[n-i,n+1-i]}w^{1/2}_{0}(\alpha_0)x\right)\nonumber\\
&=&\ffi^{(n,f)}_{w_0(\a_0),\bh(\alpha_0)}(x),
\end{eqnarray}
for any $\alpha$. Hence, from the definition of forward QMC it
follows that
$\ffi^{(f)}_{w_0(\a),\bh(\alpha)}=\ffi^{(f)}_{w_0(\a_0),\bh(\alpha_0)}$,
which yields that the uniqueness of the forward QMC associated with
the model \eqref{1Kxy1}.  Therefore we have the following

\begin{theorem}\label{uni-3}
There is a unique forward QMC for the model \eqref{1Kxy1}.
\end{theorem}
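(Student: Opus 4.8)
The plan is to show that although the defining equations \eqref{eq1}, \eqref{eq2} admit a whole one--parameter family of boundary conditions $(w_0(\alpha),\{h_x^{(n)}(\alpha)\})$, all of these produce \emph{the same} forward QMC on $\cb_L$; hence the limiting state does not depend on the choice of boundary condition, which is exactly the asserted uniqueness. First I would recall from Theorem~\ref{trajec1}(ii) together with Case~(a) of the fixed--point analysis that the only admissible solutions in the domain $\Delta$ with $h_x^{(n)}=h_x^{(n)}$ constant on each $W_n$ are, up to the free scalar $\alpha>0$, precisely those written in \eqref{solutionofmainstate}; the diagonal, scalar form is forced because \eqref{mainsystem} gives $a_{11}^{(n)}=a_{22}^{(n)}$ and the surviving off--diagonal dynamics is driven into $y^{(n)}=0$. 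So it suffices to compare the finite--volume states $\ffi^{(n,f)}_{w_0(\alpha),\bh(\alpha)}$ for different $\alpha$.

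Next I would carry out the computation displayed in \eqref{uniq-1}: by Corollary~\ref{state^nwithW_n}, $\ffi^{(n,f)}_{w_0(\alpha),\bh(\alpha)}(x)=\tr(\cw_{n]}\,x)$, and since $w_0(\alpha)$ and every $h_x^{(n)}(\alpha)$ are scalar multiples of the identity, the operator $\cw_{n]}$ factors as a scalar (depending on $\alpha$) times a product of the $K_{[i,i+1]}$'s and their adjoints. The point is that $w_0(\alpha)^{1/2}$ contributes a factor $1/\alpha$ while $\prod_{x\in W_n}h_x^{(n)}(\alpha)$ contributes $\big(\sqrt[2^n]{\alpha\cosh^4\beta}\,/\cosh^4\beta\big)^{2^n}$, and the exponent $2^n$ in $|W_n|=2^n$ is exactly what is needed for the $\alpha$--dependence to telescope: the product of these two scalars is independent of $\alpha$ (it equals the corresponding product for $\alpha_0=1/\cosh^4\beta$). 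Therefore $\ffi^{(n,f)}_{w_0(\alpha),\bh(\alpha)}=\ffi^{(n,f)}_{w_0(\alpha_0),\bh(\alpha_0)}$ for every $n$, and passing to the weak-$*$ limit via \eqref{dfgqmf} gives $\ffi^{(f)}_{w_0(\alpha),\bh(\alpha)}=\ffi^{(f)}_{w_0(\alpha_0),\bh(\alpha_0)}$.

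Finally I would note that any forward QMC associated with the model \eqref{1Kxy1} arises, through Theorem~\ref{compa}, from a solution of \eqref{eq1}--\eqref{eq2}, and under the standing symmetry assumption $h_x=h_y$ on each $W_n$ every such solution is one of the $(w_0(\alpha),\bh(\alpha))$; combined with the previous paragraph this yields a single state, proving Theorem~\ref{uni-3}. The main obstacle, and the step deserving the most care, is the bookkeeping of the exponents in \eqref{uniq-1}: one must verify that the total multiplicity with which the scalar $\sqrt[2^n]{\alpha\cosh^4\beta}$ enters (namely $|W_n|=2^n$ copies from the boundary field $\bh_n$, against one copy of $1/\alpha$ from $w_0$) cancels the $\alpha$--dependence \emph{exactly}, and that the normalization $\tr(\cw_{n]})=1$ from \eqref{eq1} is consistent with this; the operator part $\tr\big(\prod_i K_{[i,i+1]}\prod_i K_{[n-i,n+1-i]}\,x\big)$ is manifestly $\alpha$--free, so once the scalar cancellation is confirmed the result follows.
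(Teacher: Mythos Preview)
Your proposal is correct and follows essentially the same route as the paper: you identify the one--parameter family \eqref{solutionofmainstate} as the complete set of admissible boundary conditions (under the standing level--homogeneity assumption), and then reproduce the scalar cancellation in \eqref{uniq-1} to conclude that the finite--volume states, and hence their weak-$*$ limits, are independent of $\alpha$. The only cosmetic difference is that you invoke Corollary~\ref{state^nwithW_n} explicitly to pass to $\cw_{n]}$, whereas the paper writes out the corresponding trace directly; the bookkeeping of the exponent $|W_n|=2^n$ against the root $\sqrt[2^n]{\alpha\cosh^4\beta}$ is exactly the mechanism the paper uses.
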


Note that in \cite{arm} it was proved uniqueness of the ground state for the
one-dimensional quantum $XY$-model
\begin{equation}\label{xy-1d}
H=\b\sum_{n\in\bz}\{\s_{x}^{(n)}\s_{x}^{(n+1)}+\s_{y}^{(n)}\s_{y}^{(n+1)}\}.
\end{equation}
The proved Theorem \ref{uni-3} suggests that similar result can be
obtained for the Hamiltonian \eqref{xy-1d} in a Cayley tree of order
two.

{\bf Observation.} Let us denote
\begin{equation}\label{freeE1}
\tilde K_n(\a)=w^{1/2}_{0}(\alpha)\prod_{\{x,y\}\in E_1}K_{<x,y>}
\prod_{\{x,y\}\in E_2\setminus E_1}K_{<x,y>}\cdots \prod_{\{x,y\}\in
E_{n+1}\setminus E_{n}}K_{<x,y>}.
\end{equation}
Define a function $F(\b)$ by the following formula
\begin{eqnarray}\label{freeE1}
\b F(\b)=\lim_{n\to\infty}\frac{1}{|V_n|}\log \tr \left(\tilde
K_n^{*}(\a)\tilde K_n({\alpha})\right).
\end{eqnarray}
Using the same argument as in \eqref{uniq-1} one gets
\begin{eqnarray}\label{freeE2}
\frac{1}{|V_n|}\log \tr \left(\tilde K_n^{*}(\a)\tilde
K_n({\alpha})\right)&=&\frac{1}{|V_n|}\left(\log\bigg(
\frac{\sqrt[2^{n+1}]{\alpha\cosh^4\beta}}{\cosh^4\b}\bigg)^{-|W_{n+1}|}
+\log\big(\ffi^{(b)}_{w_0,\bh}(\id)\big)\right)\nonumber\\
&=&-\frac{|W_{n+1}|}{2^{n+1}|V_n|}\log\big(\alpha\cosh^4\beta\big)+
\frac{|W_{n+1}|}{|V_n|}\log\big(\cosh^4\beta\big).
\end{eqnarray}
So, taking into account
$\lim\limits_{n\to\infty}\frac{|W_{n+1}|}{|V_n|}=1$ from
\eqref{freeE2},\eqref{freeE1} we find
$$
F(\b)=\frac{4}{\b}\log\cosh\b.
$$
One can see that $F(\b)$ is an analytical function when $\b>0$. This
corresponds to the fact that the free energy of a system is an
analytical function. Of course, here the defined function $F$ is not
a free energy of the $XY$-model. On the other hand, for the same
model in a Cayley tree of order three we shall show the existence of
the quantum phase transition \cite{AMuS}. Moreover, it will be
established that derivative of certain thermodynamic function will
have discontinuity at critical values of $\b$.

\section{Conclusions}

Let us note that a first attempt of consideration of quantum Markov
fields began in \cite{[AcFi01a], [AcFi01b]} for the regular lattices
(namely for $\mathbb{Z}^d)$. But there, concrete examples of such
fields were not given. In the present paper, we have extended a
notion of QMC to fields, i.e. to Cayley tree. Note that such a tree
is the simplest hierarchical lattice with non-amenable graph
structure. This means that the ratio of the number of boundary sites
$W_n$ to the number of interior sites $V_n$ (see Sec. 2, for the
definitions of $W_n$ and $V_n$) of the tree tends to a nonzero
constant in the thermodynamic limit of a large system. Here QMCs
have been considered on discrete infinite tensor products of
$C^*$--algebras over such a tree. A tree structure of graphs allowed
us to give constructions of QMC, which generalizes the construction
of \cite{[AcFi03]} to trees. Namely, we have provided a construction
of a forward QMC defined on Cayley tree. By means of such a
construction we proved uniqueness of forward QMC associated with
$XY$-model on the second order Cayley tree. We have to stress here
that the constructed QMC associated with $XY$-model, is different
from thermal states of that model, since such states corresponds to
the $\exp(-\b \sum_{<x,y>}H_{<x,y>})$, which is different from a
product of $\exp(-\b H_{<x,y>})$. Roughly speaking, if we consider
the usual Hamiltonian system $H(\s)=-\b\sum_{<x,y>}h_{<x,y>}(\s)$,
then its Gibbs measure is defined by the fraction
\begin{equation}\label{mu-G1}
\mu(\s)=\frac{e^{-H(\s)}}{\sum_{\s}e^{-H(\s)}}.
\end{equation}
The such a measure can be viewed by another way as well. Namely,
\begin{equation}\label{mu-G2}
\mu(\s)=\frac{\prod_{<x,y>}e^{\b h_{<x,y>}(\s)}}{\sum\limits_{\s}\prod_{<x,y>}e^{\b h_{<x,y>}(\s)}}.
\end{equation}
A usual quantum mechanical definition of the quantum Gibbs states based on equation \eqref{mu-G1}. But our approach based on an alternative way (see \eqref{mu-G2}) of the definition of the quantum Gibbs states.
Note that whether or not the resulting states have a
physical interest is a question that cannot be solved on a
purely mathematical ground.

%\section*{  Bibliography }

\section*{Acknowledgement} The present study have been done within
the grant FRGS0308-91 of Malaysian Ministry of Higher Education. A finial part of
this work was done at the Abdus Salam International
Center for Theoretical Physics (ICTP), Trieste, Italy. F.M. thanks
the ICTP for providing financial support of his visit (within the scheme of
Junior Associate) to ICTP. The authors (F.M., M.S.) also acknowledge the Malaysian
Ministry of Scinece, technology and Innovation Grant 01-01-08-SF0079.

\appendix

\section{A proof of Lemma \ref{inequality}}\label{appx}

It is clear that, if $\beta >0,$ then
$$\sinh\beta\cosh\beta(1+\cosh\beta)>0.$$
Now we are going to show that
\begin{eqnarray}\label{rightineq}
\sinh\beta(1+\cosh\beta)<\cosh^3\beta.
\end{eqnarray}
Noting
$$\sinh\beta=\frac{e^{\beta}-e^{-\beta}}{2},\ \ \ \cosh\beta=\frac{e^{\beta}+e^{-\beta}}{2}.$$
and letting $t=e^{\beta},$  we reduce the last inequality
\eqref{rightineq} to
\begin{eqnarray}\label{powersix}
t^6-2t^5-t^4+7t^2+2t+1 &>& 0
\end{eqnarray}
Since $\beta >0,$ then $t>1$. Therefore, we shall show that
\eqref{powersix} is satisfied whenever $t>1$. Now consider several
cases with respect to $t$.

{\sc Case I.} Let $t\ge1+\sqrt{2}.$ Then we have
\begin{eqnarray*}
t^6-2t^5-t^4+7t^2+2t+1 =
t^4\big(t-(1+\sqrt{2})\big)\big(t-(1-\sqrt{2})\big)+7t^2+2t+1> 0
\end{eqnarray*}

{\sc Case II.} Let $2\le t\le 1+\sqrt{2}.$ Then it is clear that
$t<\sqrt{7}.$ Therefore,
\begin{eqnarray*}
t^6-2t^5-t^4+7t^2+2t+1 = t^5(t-2)+t^2(7-t^2)+2t+1> 0
\end{eqnarray*}

{\sc Case III.} Let $\sqrt{\frac72}\le t\le 2.$ Then one gets
\begin{eqnarray*}
2(t^6-2t^5-t^4+7t^2+2t+1)& = &2t^4\bigg(t^2-\frac72\bigg)+\frac52 t^4(2-t)\\
&&+\frac32 t^2(8-t^3)+2t^2+4t+2> 0
\end{eqnarray*}

{\sc Case IV.} Let $1<t\le\sqrt{\frac72}.$ Then we have
\begin{eqnarray*}
t^6-2t^5-t^4+7t^2+2t+1 = t^4(t-1)^2+t^2(7-2t^2)+2t+1> 0
\end{eqnarray*}

Hence, the inequality \eqref{rightineq} is satisfied for all
$\beta>0.$

%\end{proof}

\end{document}